\newcolumntype{M}[1]{>{$}{#1}<{$}}
\newcolumntype{T}[1]{>{\bgroup\tiny $}{#1}<{$\egroup}}
\newcolumntype{K}{>{\lvert}{c}<{\rangle}}
\newcolumntype{S}[1]{>{\centering\arraybackslash} m{#1cm}}
\newcolumntype{D}[1]{>{$\displaystyle}{#1}<{$}}
\DeclarePairedDelimiter{\ket}{\lvert}{\rangle}
\newcommand{\rep}[1]{\ensuremath{\mathbf{#1}}}
\newcolumntype{B}[1]{>{\mathbf\bgroup}{#1}<{\egroup}}
\newcolumntype{K}{>{\lvert}{c}<{\rangle}}
\newtheorem{theorem}{Theorem}
\newtheorem{lemma}{Lemma}
\newtheorem{definition}[theorem]{Definition}
\numberwithin{equation}{section}
\DeclareMathOperator{\Aut}{Aut}
\DeclareMathOperator{\Str}{Str}
\DeclareMathOperator{\Det}{Det}
\DeclareMathOperator{\Iso}{Iso}
\DeclareMathOperator{\SO}{SO}
\DeclareMathOperator{\SL}{SL}
\newcommand{\pmtwo}[4]{\begin{pmatrix}#1 & #2 \\ #3 & #4 \end{pmatrix}}
\newcommand{\pmthree}[9]{\begin{pmatrix}#1 & #2&#3 \\ #4 & #5&#6 \\#7 & #8&#9  \end{pmatrix}}
\newcommand{\be}{\begin{equation}}
\newcommand{\ee}{\end{equation}}
\newcommand{\bea}{\begin{eqnarray}}
\newcommand{\eea}{\end{eqnarray}}
\newcommand{\half}{\tfrac{1}{2}}
\newcommand{\J}{\mathfrak{J}}
\newcommand{\F}{\mathds{F}}
\newcommand{\C}{\mathds{C}}
\newcommand{\Oct}{\mathds{O}}
\newcommand{\FTS}{\mathfrak{F}}
\begin{document}

\title{Qubit entanglement: A Jordan algebraic perspective$^\star$}

\author{L. Borsten}
 \address{Blackett Laboratory, Imperial College, London, SW7 2AZ, U.K.}
\ead{leron.borsten@imperial.ac.uk}
\begin{abstract}
We review work classifying the physically distinct forms of 3-qubit entanglement using the elegant framework of Jordan algebras, Freudenthal triple systems and groups of type $E_7$.  While this framework is, in the first instance, specific to three qubits, it is shown here how the essential features may be naturally generalised to an arbitrary number of qubits. 

$^\star$Based on a talk given at \emph{3Quantum: Algebra Geometry Information} (AGMP Network), Tallinn University of Technology, Estonia, 10-13 July 2012. 
\end{abstract}


\section{Introduction}

This contribution relates a talk given at \emph{3Quantum} (2012) in the sub-theme on Jordan algebras. It also touched on two further naively disconnected  sub-themes, quantum information and string theory, and so in some limited sense satisfied  the interdisciplinary aspirations  of \emph{3Quantum}. However, given the length constraints, for the sake  of clarity we focus here on the interplay between Jordan algebras and quantum entanglement. The interested reader may consult \cite{Borsten:2008wd, Borsten:2010db, Borsten:2012fx} and the references therein for the unreported connections to black holes in string theory.

The central idea is that the Freudenthal triple system (FTS) based on the smallest  spin-factor Jordan algebra naturally captures the classification of the physically distinct forms of 3-qubit entanglement as  prescribed by the paradigm of \emph{stochastic local operations and classical communication} (SLOCC) \cite{Borsten:2009yb}. In particular, it is shown that the four  \emph{Freudenthal ranks} correspond precisely to the four 3-qubit entanglement classes: (1) Totally separable $A$-$B$-$C$, (2) Biseparable $A$-$BC$, $B$-$CA$, $C$-$AB$, (3) Totally entangled $W$, (4) Totally entangled GHZ (Greenberger-Horne-Zeilinger). This agrees perfectly with the  SLOCC classification first obtained  in \cite{Dur:2000}. The rank 4 GHZ class is regarded as maximally entangled in the sense that it has non-vanishing quartic norm, the defining invariant of the Freudenthal triple system.  

Our hope is give an account of this idea which is accessible to both the quantum information  and Jordan algebra communities.  Accordingly, we begin in \autoref{sec:entang} with an elementary introduction to the essential concepts of entanglement and entanglement classification. In particular, a three player non-local game is used to illustrate that three qubits can be totally entangled in two physically  distinguishable  ways and  how the concept of SLOCC is used to classify different forms of entanglement. Then, in \autoref{sec:3qubitfts} we briskly review the necessary elements of cubic Jordan algebras and the FTS before defining the \emph{3-qubit Jordan algebra} and the corresponding \emph{3-qubit FTS}. Having laid the foundations we proceed in \autoref{sec:FTSclass} to demonstrate how the 3-qubit FTS elegantly characterises the 3-qubit entanglement classes via the FTS \emph{ranks}.  Finally, in \autoref{sec:nquibitFTS}, we illustrate how the key features of the 3-qubit FTS may be generalised to an arbitrary number of qubits.

\section{How entangled is totally entangled?}\label{sec:entang}

The importance of entanglement to quantum computing protocols has precipitated a need to characterise, qualitatively and quantitatively, the ``amount'' of entanglement contained in given state. Indeed, what is actually meant by ``amount'' is somewhat ambiguous and there are many possible measures present in the literature. See, for example, \cite{Plenio:2007, Horodecki:2007}. One might ask, for instance, when can two totally entangled, but ostensibly different,  states  facilitate the same set of non--local quantum protocols. Two states equivalent is this sense could reasonably be considered to have the same degree of entanglement.

Already for pure 3-qubit states this question is non-trivial. Three qubits can be totally entangled in two physically distinct ways \cite{Dur:2000}. These two entanglement forms are represented by the states,
\be\label{eq:ghz}
\ket{W}=\frac{1}{\sqrt{3}}\left(\ket{001}+\ket{010}+\ket{100}\right)\qquad\text{and}\qquad\ket{GHZ}=\frac{1}{\sqrt{2}}\left(\ket{000}+\ket{111}\right).
\ee

To those unfamiliar with entanglement classification, what separates these states  is perhaps  not  entirely obvious; they are both totally entangled and permutation symmetric. Nonetheless, they exhibit distinct non-local/information-theoretic properties. One such difference is nicely drawn out  by the three-player non-local game introduced in \cite{Watrous:2006}. This is essentially a rephrasing  of Mermin's  elegant presentation \cite{Mermin:1990} of a ``Bell-type theorem without inequalities'', the first of which was found by Greenberger, Horne and Zeilinger (GHZ) \cite{Greenberger:1989}. The formulation as a non-local game is, however, especially appealing in that it emphasises  not only the remarkable  non-local properties of these states, but also how they can be used in an information-theoretic sense. More specifically, in the context
of cooperative games of incomplete information  \cite{1313847}.

A non-local game \cite{1313847, Watrous:2006} consists of \emph{players} (Alice, Bob, Charlie\ldots), who act cooperatively in order to win, and  a \emph{referee} who coordinates the game. The players may collectively decide on a strategy before the game commences. Once it has begun they may no longer communicate. Whether or not the players win is determined by the referee. To begin the referee randomly selects one question, from a known fixed set $\mathcal{Q}$, to be sent to each player. The players know only their own questions. Each player must then send back a response from the set of answers $\mathcal{A}$. The referee determines whether the players win using the set of sent questions and received answers according to some predetermined rules. These rules are known to the players before the game gets under way so that they may attempt to devise a winning strategy.   

For the  three-player game \cite{Watrous:2006} the questions sent to Alice, Bob and Charlie, denoted respectively by $r,s$ and $t$,   are taken  from the set $\mathcal{Q}=\{0, 1\}$. However, the referee ensures that $rst\in\{000, 110, 101, 011\}$ with a uniform distribution and the players are aware of this. The answers $a, b, c$,  sent back by Alice, Bob and Charlie,   are  elements of $\mathcal{A}=\{0, 1\}$. See \autoref{fig:game}. The players win if $r\vee s\vee t = a\oplus b\oplus c$, where $\vee$ and $\oplus$ respectively denote disjunction and addition mod 2, i.e for question sets $rst=000, 011, 101$ and $110$ the answer set $abc$ must satisfy $a\oplus b\oplus c = 0,1,1$ and $1$, respectively.

\begin{figure}[h!]
  \centering
    \includegraphics[width=\textwidth]{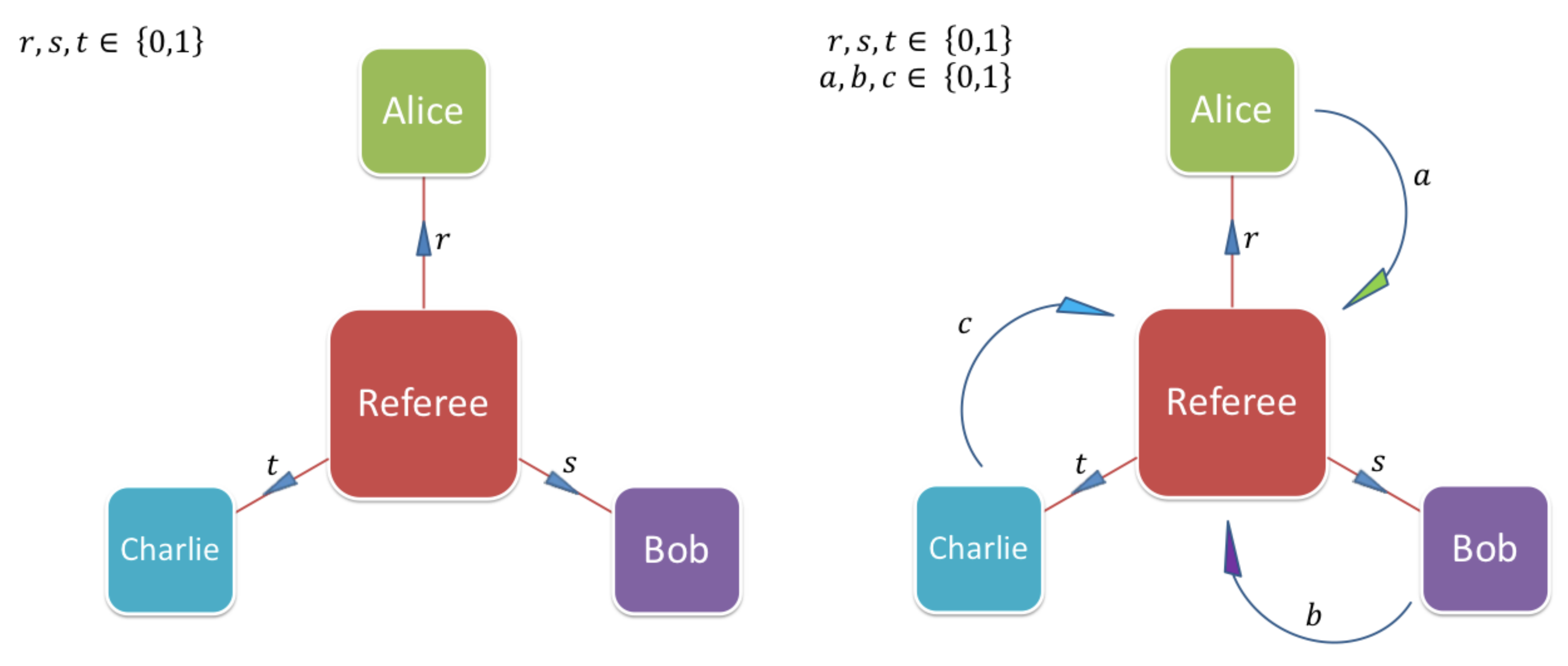}
     \caption{Three player non-local game.\label{fig:game}}
\end{figure}

In the quantum version, Alice, Bob and Charlie each possess a qubit, which they may manipulate locally. The 3-qubit   state may be entangled and used as a resource to help the players win. However, before examining how this works  let us consider how well the players can do classically, i.e. unassisted by entanglement.

A classical strategy amounts to specifying three functions, $a, b, c$, one for each player,  from the question set $\mathcal{Q}$ to the answer set $\mathcal{A}$, 

 The condition that the players win may then be written as,
\be
a(0)\oplus b(0)\oplus c(0) = 0,\;
a(1)\oplus b(1)\oplus c(0) = 1,\;
a(1)\oplus b(0)\oplus c(1) = 1,\;
a(0)\oplus b(1)\oplus c(1) = 1.\;
\ee
This implies that the best one can do is win $75\%$ of the time; the four equations cannot be simultaneously satisfied as can be seen by adding them mod 2, which yields the contradiction $0=1$ \cite{Watrous:2006}. On the other hand, the simple strategy that ``everyone always answers $1$'' satisfies three of the four equations so that the $75\%$ upper bound is actually met.

Can this be bettered when equipped with an entangled resource? The answer is a resounding yes: by sharing the GHZ state,
\be\label{eq:winstate}
\ket{\Psi}=\frac{1}{{2}}\left(\ket{000}-\ket{011}-\ket{101}-\ket{110}\right),
\ee
which is equivalent to \eqref{eq:ghz} under a local unitary rotation, they can always win \cite{Watrous:2006}.  

The winning quantum strategy is remarkably simple.   If a player receives the question ``0''  they measure their qubit in the computational basis $\{\ket{0}, \ket{1}\}$. If a player receives the question ``1''  they measure their qubit in the Hadamard basis $\{\left(\ket{0}+\ket{1}\right)/\sqrt{2}, \left(\ket{0}-\ket{1}\right)/\sqrt{2}\}$.  Their measurement outcomes are sent back as their answers. By symmetry we need only consider the two cases  $rst=000$ and $rst = 011$. 
(1)  $rst=000$: All players measure in the computational basis. From \eqref{eq:winstate} it is clear that only an odd number of 0's can appear $\Rightarrow$ $a\oplus b \oplus c=0$. Always win.
(2) $rst=011$: Alice measures in the computational basis, while Bob and Charlie measure in the Hadamard  basis. Consulting the locally rotated state,
\[
\mathds{1}\otimes H\otimes H \ket{\Psi}=\frac{1}{{2}}\left(\ket{001}+\ket{010}-\ket{100}+\ket{111}\right),
\]
where  $H$ is the unitary operator relating the computational basis to the Hadamard basis, it is clear that only an even number of 0's can appear
$\Rightarrow$ $a\oplus b \oplus c=1$. Always win.
Hence, using the GHZ entangled resource \eqref{eq:winstate}, Alice, Bob and Charlie can win 100\% of the time, outdoing the best classical strategy by 25\%. 

 One might naively expect to be able to devise a winning strategy when Alice, Bob and Charlie are equipped with a shared $W$-state, given it is also totally entangled. However, the $W$-state cannot be used win with certainty even when allowing Alice, Bob and Charlie to each measure in any pair of bases they wish \cite{Borsten:2013rca}.  Hence, despite both being totally entangled, the GHZ-state is ``more'' entangled than the $W$-state,  in this context at least. Certainly, they constitute physically inequivalent forms of entanglement. We remark that the non-local properties of the
W and GHZ states may also be compared using the sheaf-
theoretic framework of \cite{abramsky2011sheaf, PhysRevA.85.062114}.

We must therefore conclude that it is not enough to simply say a state is totally entangled - one must also specify in what way it is totally entangled. This is achieved using the paradigm of \emph{local operations and classical communication} (LOCC). See, for example, \cite{Plenio:2007}.  Roughly, given a composite quantum system we allow purely \emph{local} quantum operations  to be performed on the individual components.  These local operations may be supplemented by classical communication: the separated experimenters may communicate  via a classical channel, e-mail for example. Any number of  LOCC rounds may be performed, which makes the class of allowed operations  difficult to characterise. For an in-depth account of LOCC protocols, see \cite{Horodecki:2007}. In this manner arbitrary classical correlations between the constituent subsystems may be generated. However, no quantum correlations  may be established - all information exchanged was classical.  LOCC protocols cannot increase the amount of entanglement. 

 This motivates the concept of \emph{Stocastic} LOCC  equivalence, introduced in \cite{Bennett:1999, Dur:2000}: two states  lie in the same SLOCC-equivalence class if and only if they may be transformed into one another with some \emph{non-zero probability} using LOCC operations.
 The crucial observation is that since LOCC cannot create entanglement any two SLOCC-equivalent states must necessarily possess the same entanglement, irrespective of the particular measure used. It is this property which makes the SLOCC paradigm so suited to the task of classifying entanglement. 
 
We restrict our attention to pure states. For an $n$-qubit system, two states, 
\be
\ket{\Psi}=a_{A_1\ldots A_n}\ket{A_1\ldots A_n}, \qquad \ket{\Phi}=b_{B_1\ldots B_n}\ket{B_1\ldots B_n}, 
\ee 
are SLOCC-equivalent if and only if they are related by an element of
$\SL_1(2, \mathds{C})\times \SL_2(2, \mathds{C})\times \ldots\SL_n(2, \mathds{C})$ \cite{Dur:2000}, which will be referred to as the \emph{SLOCC-equivalence group}. The Hilbert space is partitioned into equivalence classes or orbits under the SLOCC-equivalence group. For the $n$-qubit system the space of SLOCC-equivalence classes  is given by,
\begin{equation}\label{eq:SLOCCeqiv}
\frac{{\mathds C}^2\otimes{\mathds C}^2\ldots\otimes\C^2}{\SL_1(2, \mathds{C})\times \SL_2(2, \mathds{C})\times \ldots\SL_n(2, \mathds{C})}.
\end{equation}
This  is the space of physically distinct form of entanglement. Hence, the SLOCC entanglement classification amounts to understanding \eqref{eq:SLOCCeqiv}.

\section{The 3-qubit Jordan algebra and Freudenthal triple system}\label{sec:3qubitfts}

\subsection{Cubic Jordan algebras}\label{sec:J}

A Jordan algebra $\mathfrak{J}$ is vector space defined over a ground field $\mathds{F}$ equipped with a bilinear product satisfying
\begin{equation}\label{eq:Jid}
X\circ Y =Y\circ X,\quad X^2\circ (X\circ Y)=X\circ (X^2\circ Y), \quad\forall\ X, Y \in \mathfrak{J}.
\end{equation}
The class of \emph{cubic} Jordan algebras are constructed as follows \cite{McCrimmon:2004}. Let $V$ be a vector space equipped with a cubic norm, which is. a homogeneous map of degree three, $N:V\to \mathds{F}$ s.t. $N(\lambda X)=\lambda^3N(X)$, $\forall \lambda \in \mathds{F}, X\in V$, such that 
\begin{equation}
N(X, Y, Z):=
\frac{1}{6}[N(X+ Y+ Z)-N(X+Y)-N(X+ Z)-N(Y+
Z)+N(X)+N(Y)+N(Z)]
\end{equation}
is trilinear. 
If $V$ further contains a base point $N(c)=1, c\in V$  one may define the following three maps,
    \begin{equation}\label{eq:cubicdefs}
    \begin{split}
    \Tr:V\to\mathds{F};\quad& X    \mapsto3N(c, c, X),\\
 		 S: V\times V\to\mathds{F};\quad& (X,Y)       \mapsto6N(X, Y, c),\\
    \Tr:V\times V\to\mathds{F};\quad& (X,Y)      \mapsto\Tr(X)\Tr(Y)-S(X, Y).
    \end{split}
    \end{equation}

A cubic Jordan algebra $\mathfrak{J}$, with multiplicative identity $\mathds{1}=c$, may be derived from any such vector space if $N$ is \emph{Jordan cubic}. That is: (i) the trace bilinear form \eqref{eq:cubicdefs} is non-degenerate (ii) the quadratic adjoint map, $\sharp\colon\mathfrak{J}\to\mathfrak{J}$, uniquely defined by $\Tr(X^\sharp, Y) = 3N(X, X, Y)$, satisfies
$(X^{\sharp})^\sharp=N(X)X$, $\forall X\in \mathfrak{J}$. The Jordan product is then given by 
\be
X\circ Y = \half\big(X\times Y+\Tr(X)Y+\Tr(Y)X-S(X, Y)\mathds{1}\big),
\ee
where  $X\times Y := (X+Y)^\sharp-X^\sharp-Y^\sharp$.

  \begin{definition}[Structure group $\Str(\mathfrak{J})$]  Invertible $\mathds{F}$-linear transformations $\sigma$ preserving the cubic norm up to a fixed scalar factor,
   \be
\Str(\J):=\{\sigma\in \Iso_{\F}(\J)| N(\sigma A)=\lambda N(A), \lambda\in\F\}.
\ee
The reduced structure group $\Str_0(\mathfrak{J})\subset \Str(\mathfrak{J})$ preserves the norm exactly.
  \end{definition}
  The exceptional Jordan algebra of $%
3\times 3$ Hermitian octonionic matrices, denoted $\J^{\mathds{O}}_{3}$, is perhaps the most important and well known example. The reduced structure group $\Str_0(\J^{\mathds{O}}_{3})$ in this case is given by the 78-dimensional exceptional Lie group $E_{6(-26)}$.

  Any cubic Jordan algebra element may be assigned a $\Str(\J)$-invariant \emph{rank} \cite{Jacobson:1961}. The ranks partition the  $V$.

\begin{definition}[Cubic Jordan algebra rank] A non-zero element $A\in\J$ has a rank given by:
\be\label{eq:Jranks}
\begin{split}
\textrm{\emph{Rank}} A =1& \Leftrightarrow A^\sharp=0;\\
\textrm{\emph{Rank}} A =2& \Leftrightarrow N(A)=0,\;A^\sharp\not=0;\\
\textrm{\emph{Rank}} A =3& \Leftrightarrow N(A)\not=0.\\
\end{split}
\ee
\end{definition}
\subsection{The Freudenthal triple system}\label{sec:FTS}
In 1954 Freudenthal \cite{Freudenthal:1954} found that the
133-dimensional exceptional Lie group $E_{7}$ could be understood in terms
of the automorphisms of a construction based on the  56-dimensional $%
E_{7}$-module  built from the exceptional Jordan algebra of $%
3\times 3$ Hermitian octonionic matrices.  A key feature of this construction is the triple product, hence the name. 

Given a cubic Jordan algebra $\mathfrak{J}$ defined over a field $\mathds{F}$, one is able to construct an FTS by defining the vector space $\mathfrak{F(J)}$,
\begin{equation}
\mathfrak{F(J)}=\mathds{F\oplus F}\oplus \mathfrak{J\oplus J}.
\end{equation}
An arbitrary element $x\in \mathfrak{F(J)}$ is conventionally written as a ``$2\times 2$ matrix'',
\begin{equation}
x=\begin{pmatrix}\alpha&A\\B&\beta\end{pmatrix} \quad\text{where} ~\alpha, \beta\in\mathds{F}\quad\text{and}\quad A, B\in \mathfrak{J},
\end{equation}
but for notational convenience we will often also write $x=(\alpha, \beta, A, B)$.

The FTS comes equipped with a non-degenerate bilinear antisymmetric quadratic form, a quartic form and a trilinear triple product \cite{Freudenthal:1954,Brown:1969}:
\begin{subequations}
\begin{enumerate}
\item Quadratic form $ \{x, y\}$: $\mathfrak{F}\times \mathfrak{F}\to \mathds{F}$
    \begin{equation}\label{eq:bilinearform}
    \{x, y\}=\alpha\delta-\beta\gamma+\Tr(A,D)-\Tr(B,C),\quad
\ee    
where $ x=(\alpha, \beta, A, B),  y=(\gamma, \delta, C, D)$.
\item Quartic form $q:\mathfrak{F}\to \mathds{F}$
    \begin{equation}\label{eq:quarticnorm}
        q(x)=-2[\alpha\beta-\Tr(A, B)]^2 
         -8[\alpha N(A)+\beta N(B)-\Tr(A^\sharp, B^\sharp)].
        \end{equation}
\item Triple product $T:\mathfrak{F\times F\times F\to F}$ which is uniquely defined by
	\begin{equation}\label{eq:tripleproduct}
	\{T(x, y, w), z\}=q(x, y, w, z)
	\end{equation}
	where $q(x, y, w, z)$ is the full linearisation of $q(x)$ such that $q(x, x, x, x)=q(x)$.
\end{enumerate}
\end{subequations}
\begin{definition}[The automorphism group $\Aut(\FTS)$ \cite{Brown:1969}] Invertible  $\F$-linear transformations preserving the quartic and quadratic forms:
\be
\Aut(\FTS):=\{\sigma\in\Iso_\F(\FTS)|\{\sigma x, \sigma y\}=\{x, y\},\;q(\sigma x)=q(x)\}\label{eq:brownfts}\ee
\end{definition}
\begin{lemma}[Brown \cite{Brown:1969}] The following transformations generate elements of $\Aut(\FTS)$:
\be\label{eq:ftstrans}
\begin{split}
\varphi(C):\pmtwo{\alpha}{A}{B}{\beta}&\mapsto \pmtwo{\alpha+\Tr(B,C)+\Tr(A,C^\sharp)+\beta N(C)}{A+\beta C}{B+A\times C +\beta C^\sharp}{\beta};\\
\psi(D):\pmtwo{\alpha}{A}{B}{\beta}&\mapsto \pmtwo{\alpha}{A+B\times D +\alpha D^\sharp}{B+\alpha D}{\beta+\Tr(A,D)+\Tr(B,D^\sharp)+\alpha N(D)};\\
\widehat{\tau}:\pmtwo{\alpha}{A}{B}{\beta}&\mapsto \pmtwo{\lambda\alpha}{\tau A}{^t\tau^{-1} B}{\lambda^{-1}\beta};
\end{split}
\ee
where $C,D\in\J$ and $\tau\in \Str(\J)$ s.t. $N(\tau A)=\lambda N(A)$. For convenience we also define  $\mathcal{Z}=\phi(-\mathds{1})\psi(\mathds{1})\phi(-\mathds{1})$,
\be
\mathcal{Z}:(\alpha, \beta, A, B) \mapsto (-\beta, \alpha, -B, A).
\ee
\end{lemma}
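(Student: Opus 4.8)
The statement asserts that each of $\varphi(C)$, $\psi(D)$ and $\widehat{\tau}$ satisfies the two defining conditions for membership in $\Aut(\FTS)$: invariance of the bilinear form $\{\cdot,\cdot\}$ and of the quartic form $q$. Before starting I would record the elementary consequences of the cubic-Jordan hypotheses that are used throughout. From $\Tr(X^\sharp,Y)=3N(X,X,Y)$ and the symmetry of the trilinear polarisation one has $\Tr(X\times Y,Z)=6N(X,Y,Z)$, hence $X\times X=2X^\sharp$, $\Tr(X^\sharp,X)=3N(X)$, $\Tr(A\times C,C)=2\Tr(A,C^\sharp)$, $\Tr(C,E\times C)=2\Tr(C^\sharp,E)$, together with $N(X+Y)=N(X)+\Tr(X^\sharp,Y)+\Tr(X,Y^\sharp)+N(Y)$; and from the adjoint identity $(X^\sharp)^\sharp=N(X)X$ one obtains, by polarisation, the \emph{linearised adjoint identities} $X^\sharp\times(X\times Y)=N(X)Y+\Tr(X^\sharp,Y)X$ and $(X\times Y)^\sharp=\Tr(X^\sharp,Y)Y+\Tr(X,Y^\sharp)X-X^\sharp\times Y^\sharp$. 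Together with the defining property $N(\tau A)=\lambda N(A)$ of $\tau\in\Str(\J)$, these are the only tools required.

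The generator $\widehat{\tau}$ is disposed of directly: from $N(\tau A)=\lambda N(A)$ one deduces $(\tau A)^\sharp=\lambda\,{}^{t}\tau^{-1}(A^\sharp)$ (pair $(\tau A)^\sharp$ against $\tau Y$ and invoke non-degeneracy of the trace form), $N({}^{t}\tau^{-1}B)=\lambda^{-1}N(B)$ (apply the previous identity twice via $(X^\sharp)^\sharp=N(X)X$), and $\Tr(\tau A,{}^{t}\tau^{-1}B)=\Tr(A,B)$ (definition of the transpose-adjoint); inserting these into \eqref{eq:bilinearform} and \eqref{eq:quarticnorm} every monomial acquires a net power of $\lambda$ that is zero, so both forms are preserved. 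The map $\mathcal{Z}:(\alpha,\beta,A,B)\mapsto(-\beta,\alpha,-B,A)$ is checked in one line, since $q$ is manifestly even under it and $\{\cdot,\cdot\}$ manifestly invariant. Finally a short computation with the explicit formulas of the lemma gives $\psi(D)=\mathcal{Z}\,\varphi(-D)\,\mathcal{Z}^{-1}$, so once $\varphi(C)\in\Aut(\FTS)$ is proved the case of $\psi(D)$ follows; equivalently one reruns the $\varphi$-argument under the symmetry $(\alpha,A)\leftrightarrow(\beta,B)$ of $q$ and $\{\cdot,\cdot\}$.

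There remains $\varphi(C)$, for which I write $\varphi(C)x=(\alpha',\beta,A',B')$ with $A'=A+\beta C$ and $B'=B+A\times C+\beta C^\sharp$. Invariance of $\{\cdot,\cdot\}$ is a quick bookkeeping exercise: expanding $\{\varphi(C)x,\varphi(C)y\}$ for $x=(\alpha,\beta,A,B)$, $y=(\gamma,\delta,E,F)$ and using $\Tr(X\times Y,Z)=6N(X,Y,Z)$ and $\Tr(C^\sharp,C)=3N(C)$ to reduce the mixed terms, everything beyond $\{x,y\}$ cancels in pairs. Invariance of $q$ is the substantive step: I would substitute $A',B'$ into \eqref{eq:quarticnorm}, expand $N(A+\beta C)=N(A)+\beta\Tr(A^\sharp,C)+\beta^{2}\Tr(A,C^\sharp)+\beta^{3}N(C)$, $(A+\beta C)^\sharp=A^\sharp+\beta\,A\times C+\beta^{2}C^\sharp$ and the three-term expansion of $N(B+A\times C+\beta C^\sharp)$, use the linearised adjoint identities to rewrite $(A\times C)^\sharp$, $N(A\times C)$, $\Tr(A^\sharp,(A\times C)^\sharp)$, $\Tr(A^\sharp\times C^\sharp,A\times C)$, and so on, in terms of $N,\Tr,\sharp,\times$ of $A,B,C$ alone, and then organise the outcome by total degree in $\beta$, checking order by order that the contributions reproduce exactly the corresponding pieces of $q(x)$.

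The main obstacle is precisely this last verification: the cross terms produced by $A\times C$ inside $B'$ and $B'^{\sharp}$ — the ones carrying $N(A\times C)$ and $\Tr(A^\sharp\times C^\sharp,A\times C)$ — are the ones whose cancellation is invisible by inspection and requires the linearised adjoint identity (and one further linearisation of it), while keeping the numerical coefficients straight (the $2$'s from $X\times X=2X^\sharp$, the $6$'s from $\Tr(X\times Y,Z)=6N(X,Y,Z)$) is where the computation is delicate. A conceptually cleaner route, which I would use to double-check, is to observe that $\varphi(C)=\mathds{1}+\phi_C+\half\phi_C^{2}+\tfrac{1}{6}\phi_C^{3}$ with $\phi_C:(\alpha,\beta,A,B)\mapsto(\Tr(B,C),0,\beta C,A\times C)$ nilpotent of index $4$ and $C\mapsto\phi_C$ linear, so that $\{\varphi(tC)\}_t$ is the one-parameter group $\exp(t\phi_C)$; it then suffices to verify the two \emph{infinitesimal} conditions $\{\phi_Cx,y\}+\{x,\phi_Cy\}=0$ and $q(\phi_Cx,x,x,x)=0$, which use only the lower-degree identities, the finite invariance following by the standard fact that a nilpotent infinitesimal automorphism exponentiates to an automorphism (the relevant ground field having characteristic zero). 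Once $\varphi(C)$, $\psi(D)$, $\widehat{\tau}$ and $\mathcal{Z}$ all satisfy \eqref{eq:brownfts}, the lemma is proved.
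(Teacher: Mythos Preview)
The paper does not prove this lemma at all: it is quoted from Brown~\cite{Brown:1969} and no argument is supplied. Your proposal therefore cannot be compared against a proof in the paper, but it is a correct and standard verification, and in fact follows Brown's own line of reasoning rather closely. The reduction $\psi(D)=\mathcal{Z}\,\varphi(-D)\,\mathcal{Z}^{-1}$ is correct and efficient, and your second route for $\varphi(C)$ --- recognising it as $\exp(\phi_C)$ with $\phi_C:(\alpha,\beta,A,B)\mapsto(\Tr(B,C),0,\beta C,A\times C)$ nilpotent of index four, so that only the two linear conditions $\{\phi_Cx,y\}+\{x,\phi_Cy\}=0$ and $q(\phi_Cx,x,x,x)=0$ need checking --- is the clean way to do it and is what Brown does.

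One caveat you should be aware of: your claim for $\widehat{\tau}$ that ``every monomial acquires a net power of $\lambda$ that is zero'' is not literally true with the formula exactly as printed in the paper. With $N(\tau A)=\lambda N(A)$ and $\alpha\mapsto\lambda\alpha$ one gets $\alpha N(A)\mapsto\lambda^{2}\alpha N(A)$, and similarly $\beta N(B)\mapsto\lambda^{-2}\beta N(B)$, so $q$ is \emph{not} preserved. The correct statement (as in Brown and Krutelevich) has the scalings on $\alpha$ and $\beta$ swapped, i.e.\ $\alpha\mapsto\lambda^{-1}\alpha$, $\beta\mapsto\lambda\beta$; equivalently one keeps the printed $\widehat{\tau}$ but takes the convention $N(\tau A)=\lambda^{-1}N(A)$. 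This is consistent with the paper's own $3$-qubit specialisation in \eqref{eq:3qtrans}, where $A^{0}=\alpha$ picks up $(\lambda_{1}\lambda_{2}\lambda_{3})^{-1}$ while $N$ scales by $\lambda_{1}\lambda_{2}\lambda_{3}$. Your argument goes through verbatim once this typo is corrected; it is worth flagging rather than silently ``proving'' a false identity.
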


The archetypal example is given by setting $\J=\J^{\mathds{O}}_{3}$, in which case elements of $\FTS(\J^{\mathds{O}}_{3})$ transform as the $\rep{56}$ of $E_7$. Decomposing under reduced structure group $\Str_0(\J^{\mathds{O}}_{3})=E_6$ gives the branching  $\rep{56}\rightarrow\rep{1}+\rep{1}+\rep{27}+\rep{27}'$ (neglecting the $\SO(2)$ weights), where $\alpha, \beta$ comprise the singlets and $A,B$ transform as the $\rep{27}, \rep{27}'$.

The conventional concept of matrix rank may be generalised to Freudenthal triple systems in a natural and $\Aut(\mathfrak{F})$-invariant manner. 
\begin{definition}[The FTS Rank \cite{Ferrar:1972, Krutelevich:2004}] 
The rank of a non-zero element $x\in\mathfrak{F}$ is  defined by:
\be\label{eq:FTSrank}
\begin{split}
\textrm{\emph{Rank}} x=1&\Leftrightarrow \Upsilon_x(y)=0\;\forall y;\\
\textrm{\emph{Rank}} x=2&\Leftrightarrow \exists y\;\textrm{s.t.}\;\Upsilon_x(y)\not=0,\;T(x,x,x)=0;\\
\textrm{\emph{Rank}} x=3&\Leftrightarrow T(x,x,x)\not=0,\;q(x)=0;\\
\textrm{\emph{Rank}} x=4&\Leftrightarrow q(x)\not=0.\\
\end{split}
\ee
where $\Upsilon_x(y):=3T(x,x,y)+x\{x,y\}x$.
\end{definition}

\subsection{The 3-qubit  Freudenthal triple system}\label{sec:3quibitFTS}

\begin{definition}[3-qubit cubic Jordan algebra] We define the 3-qubit cubic Jordan algebra, denoted as $\J_{ABC}$, as the complex vector space $\C\oplus\C\oplus\C$ with elements  $A=(A_1, A_2, A_3)$ {and cubic norm} $N(A)=A_1A_2A_3.$
\end{definition}
Using the cubic Jordan algebra construction \eqref{eq:cubicdefs}, one finds
\begin{equation}
\Tr(A,B)= A_1B_1+A_2B_2+A_3B_3,
\end{equation}
so that, using $\Tr(A^\sharp,B)=3N(A,A,B)$, the quadratic adjoint is given by
\be
A^\sharp=(A_2A_3, A_1A_3, A_1A_2),
\ee
and therefore $(A^\sharp)^\sharp=(A_1A_2A_3A_1,A_1A_2A_3A_2,A_1A_2A_3A_3)=N(A)A.$
It is not hard to check $\Tr(A,B)$ is non-degenerate and so $N$ is Jordan cubic. In fact, it is the smallest degree three spin-factor Jordan algebra, see for example \cite{McCrimmon:2004, Borsten:2011nq}. The Jordan product is given by
\begin{equation}
A\circ B=(A_1B_1, A_2B_2, A_3B_3).
\end{equation}
The structure and reduced structure groups are given by $[\SO(2, \mathds{C})]^3$ and $[\SO(2, \mathds{C})]^2$ respectively.

\begin{definition}[3-qubit Freudenthal triple system] We define the 3-qubit Freudenthal triple system, denoted $\FTS_{ABC}$, as the complex vector space,
\be
\FTS_{ABC}:=\C\oplus\C\oplus\J_{ABC}\oplus\J_{ABC}.
\ee 
\end{definition}
We identify the  eight  components of $\FTS_{ABC}$ with the eight three qubit basis vectors $\ket{ABC}$  so that, for $\ket{\psi}=a_{ABC}\ket{ABC}$,
\begin{equation}
        \begin{pmatrix} \alpha & (A_1, A_2, A_3) \\
                        (B_1,B_2,B_3)   &  \beta
        \end{pmatrix}\leftrightarrow \Psi:=\
        \begin{pmatrix} a_{111} & (a_{001}, a_{010}, a_{100}) \\
                       (a_{110}, a_{101}, a_{011})&a_{000}
        \end{pmatrix}.
\end{equation}

Crucially, the automorphism group of $\FTS_{ABC}$, as defined in \eqref{eq:brownfts}, is $\SL(2, \C)\times\SL(2, \C)\times\SL(2, \C)\rtimes S_3$, where $S_3$ is the three-qubit permutation group.  On including the permutation group the three biseparable entanglement classes, $A$-$BC$, $B$-$CA$ and  $C$-$AB$, are identified.

\section{FTS entanglement classification}\label{sec:FTSclass}

The ranks of the FTS are in fact the entanglement classes: all states of a given rank $r=1,2,3$ are SLOCC-equivalent. Rank four states constitute a $\dim_\C=1$ family of equivalent states parametrised by $q(\Psi)$. More specifically, we have:  (Rank 1) Totally separable states $A$-$B$-$C$, (Rank 2) Biseparable states $A$-$BC$, (Rank 3) Totally entangled  $W$ states, (Rank 4) Totally entangled GHZ states. The rank 4 GHZ class is regarded as maximally entangled in the sense that it has non-vanishing quartic norm. See \autoref{tab:merge}. To prove this statement we use the following  result, which is an extension of  lemma 24 in \cite{Krutelevich:2004}:

\begin{table}
\begin{tabular*}{\textwidth}{@{\extracolsep{\fill}}cccccM{c}M{c}cc}
\toprule
& \multirow{2}{*}{Class}  &\multirow{2}{*}{Rank} &\multirow{2}{*}{Representative state}& & \multicolumn{2}{c}{\text{FTS rank condition}}               & & \\
\cline{5-8}
&                        &   &                    & & \text{vanishing}                     & \text{non-vanishing} & & \\
\hline
& Null                   & 0  	&	 0            & & \Psi  				                  & -                    & & \\
& $A$-$B$-$C$            & 1  &	$\ket{111}$	             & & \Upsilon_\Psi(\Phi)& \Psi 		         & & \\
& $A$-$BC$               & 2 &	$\ket{111}+\ket{100}$	             & & T(\Psi,\Psi,\Psi) 					  &\Upsilon_\Psi(\Phi)              & & \\
& W                      & 3  		   &        $\ket{111}+\ket{001}+\ket{010}$  & & q(\Psi)							  & T(\Psi,\Psi,\Psi) 	 & & \\
& GHZ                    & 4  		&      $\ket{111}+\ket{001}+\ket{010}+k\ket{100}$       & & -                                    & q(\Psi)              & &\\
\hline
\end{tabular*}
\caption{The entanglement classification of three qubits as according to the FTS rank system.\label{tab:merge}}
\end{table}

\begin{lemma}\label{lem:rcf}
Every state $\Psi$ is SLOCC-equivalent to the reduced canonical form:
\be\label{eq:rcf}
\Psi_{\text{\emph{red}}}=(1,  0, A, 0)\longleftrightarrow\ket{111}+a_{001}\ket{001}+a_{010}\ket{010}+a_{100}\ket{100}
\ee
where $\Det a=4a_{011}a_{101}a_{110}$.
\end{lemma}
\begin{proof} We start with a generic state $(\alpha, \beta, A, B)$. We may always assume $B$ is non-zero. (If $B=0$ and $A\not=0$ use $\mathcal{Z}$. If $B=0, A=0$ then we may assume that  $\alpha$ non-zero, using $\mathcal{Z}$ if necessary, and apply $\psi(D)$ to get a non-zero $B$-componant, as required.)  Apply $\phi(C)$ with $C^{\sharp}=0$ s.t. and  $\alpha\mapsto \alpha +\Tr(C, B)=1$, which is  always possible since $B\not=0$,  the trace form is non-degenerate and $\J_{ABC}$ is spanned by its rank 1 elements. We are left with a new state $(1, \beta', A', B')$. Now apply $\psi(D)$ with $D=-B'$, so that $(1, \beta', A', B')\mapsto (1, \beta'', A'', 0)$.
 
Hence we may  assume from the outset that our state is in the reduced form
$
(1, \beta, A, 0)$. We now  show that we may also set $\beta=0$.  Since the Jordan ranks \eqref{eq:Jranks} partition $\mathfrak{J}_{ABC}$, there are  four subcases to consider: (i) $A=0$, (ii) $A\not=0, A^\sharp=0$,   (iii) $A^\sharp\not=0, N(A)=0$ and (iv) $N(A)\not=0$.

(i) Apply $\phi(C)$ with $C\not=0, C^\sharp=0$: $(1, \beta, 0, 0)\mapsto (1, \beta, \beta C, 0)$ so that we are now in case (ii).

(ii) We may assume with out loss of generality, $A=(a, 0, 0)$, $a\not=0$. Let $C=(0,c,0)$,
\[
\phi(C):(1, \beta  , A, 0)\mapsto(1, \beta  , A', A\times C),
\]
where $A'=(a, \beta c, 0) \Rightarrow A'^\sharp\not=0, N(A')=0$.
Apply $\psi(D)$ with $D=-A\times C$, \[
\psi(D):(1, \beta  , A', A\times C)\mapsto(1, \beta, A', 0),
\]
so that  we find ourselves in case (iii).

(iii) Without loss of generality we may assume $A=(0, a_2, a_3)$, $a_2,a_3\not=0$. Let $C=(c,0,0)$,
\[
\phi(C):(1, \beta  , A, 0)\mapsto(1, \beta  , A', B'),
\]
where $A'=(\beta c, a_2, a_3)$ and  $B'=(0 , c a_3, c a_2)$. Apply $\psi(D)$ with $D=-B'$, \[
\psi(D):(1, \beta  , A', B')\mapsto(1, \beta-2ca_2a_3, A'', 0),
\]
and let $c=\beta/(2a_2a_3)$ to obtain the canonical reduced  form $(1,0,A'',0)$.

(iv) The augment of case (iii) does not require $N(A)=0$ and, hence, also applies to the present case.
\end{proof}

Given \autoref{lem:rcf}, the entanglement classifications  follows almost trivially. The the rank conditions applied to the reduced canonical form \eqref{eq:rcf} dramatically simplify and imply the that each rank (up to permutation) is represented a state corresponding to the classes of \autoref{tab:merge}:
\be\label{eq:states}
\begin{array}{lllllllll}
\textrm{Rank} \Psi_{\text{red}}  =1& \Leftrightarrow &A=0,&\quad \Rightarrow &\quad\Psi_{\text{red}}= \ket{111},\\
\textrm{Rank} \Psi_{\text{red}} =2& \Leftrightarrow &A^\sharp=0,\;A\not=0,&\quad \Rightarrow&\quad\Psi_{\text{red}}= \ket{111}+\ket{001},\\
\textrm{Rank} \Psi_{\text{red}} =3& \Leftrightarrow &N(A)=0,\;A^\sharp\not=0,&\quad \Rightarrow&\quad\Psi_{\text{red}}= \ket{111}+\ket{001}+\ket{010},\\
\textrm{Rank} \Psi_{\text{red}} =4& \Leftrightarrow &N(A)\not=0,&\quad \Rightarrow&\quad\Psi_{\text{red}}= \ket{111}+\ket{001}+\ket{010}+k\ket{100},\\
\end{array}
\ee
where $k\not=0$. 
To reach the final form of the representative states on the far left of \eqref{eq:states}, we have scaled using $\hat{\tau}\in [\SO(2, \mathds{C})]^2\cong \Str_0(\J_{ABC})\subset\Aut(\FTS_{ABC})$, see \eqref{eq:ftstrans}. 

Since the ranks partition $\FTS_{ABC}$, this completes the orbit (entanglement class) classification of \eqref{eq:SLOCCeqiv}, which is summarised in \autoref{tab:merge}. As claimed, elements of rank 1, 2 and 3 belong to a single orbit corresponding  to totally separable, biseparable and $W$ states, respectively. Rank 4 elements belong to a one complex dimensional family of orbits, parametrised by $q(\Psi)=-8k$, and correspond to GHZ states. Note, the rank classification places GHZ above $W$; they are not merely inequivalent, but ordered, as reflected by three player non-local game \cite{Borsten:2013rca}.

\section{Generalising to an $n$-qubit FTS}\label{sec:nquibitFTS}

The success of the  FTS classification of 3-qubit entanglement naturally raises the question of generalisation. Are there other composite quantum systems amenable to the FTS treatment?  What about mixed states? Is there an extension to an arbitrary number of qubits?  

Remarkably, it has already been shown that a variety of the FTS based on cubic Jordan algebras  provide the pure state SLOCC entanglement  classification of composite quantum systems, including mixtures of bosonic and fermonic qudits \cite{Borsten:2008, Borsten:2008wd, levay-2008, Levay:2009, Borsten:2012fx, Levay:2012tg}. Moreover, in the case of three qubits the FTS may be used in the  mixed state classification  \cite{Szalay:2012}.  

In the subsequent sections we focus on the final question posed above. While there is no arbitrary $n$-qubit FTS \emph{per se}, we may attempt to identify those aspects of the 3-qubit FTS which naturally generalise to an arbitrary number of qubits in the hope that these universal features are illuminating. This is the approach taken here.

\subsection{The $n$-qubit state reorganised}
Recall, the Jordan algebra formulation of the FTS corresponded to decomposing the representation carried by the FTS under the $\Str_0(\J)\subset\Aut(\FTS)$.  In the case of three qubits we found the  state split into the direct sum of four pieces,
\be
\alpha=a_{111},
\beta=a_{000},
A=(a_{100},a_{010},a_{001}),
B=(a_{011},a_{101},a_{110}),
\ee
where $\alpha, \beta$ are $\Str_0(\J)$ singlets. This leads us to the first important observation: $\alpha, \beta, A, B$ are the closed subsets under the 3-qubit permutation group $S_3$. Indeed, if we are only interested in the  SLOCC entanglement classification up to permutations, as we are, it is only natural to work with  $S_n$-closed subsets as the basic building blocks. Hence,  the $2^n$ state vector coefficients will be collected into the  $n+1$  subsets closed under $S_n$. It will prove convenient to represent these subsets using $n+1$ totally symmetric tensors with ranks ranging from 0 to $n$,
\be
\mathcal{A}_n:=\{A_0, A_{i_1}, A_{i_1i_2},\ldots, A_{i_1i_2\ldots i_n}\}, \qquad\text{where}\qquad i_k=1, 2, \ldots,n,
\ee
which are vanishing on any diagonal, i.e. $A_{i_1i_2\ldots i_n}=0$ if any two indices are the same. The counting of components goes like $p$-forms, correctly yielding a total of $2^n$ independent coefficients. Indeed, we could have just as well defined $\mathcal{A}_n$ as a set of totally antisymmetric tensors of ranks 0 to $n$, avoiding the need to impose tracelessness. However, the 3-qubit FTS structure most naturally transfers over using the symmetric formulation, so we will stay with that convention here.

 For three qubits we have (with a slight abuse of notation for $A_{ijk}$),
\be
A_0=a_{000},\quad
A_i=\begin{pmatrix}a_{100}\\a_{010}\\a_{001}\end{pmatrix},\quad
A_{ij}=\pmthree{0}{a_{110}}{a_{101}}{a_{110}}{0}{a_{011}}{a_{101}}{a_{011}}{0},\quad
A_{ijk}=a_{111}.
\ee
Note, numbering the qubits from left to right, the values of the indices on the symmetric tensors determine which indices  on its corresponding state vector coefficient take the value 1. For example, $A_1=a_{100}, A_2=a_{010}, A_3=a_{001}$ and $A_{12}=a_{110}, A_{13}=a_{101}$ and so on. We are grateful to Duminda Dahanayake for pointing out  this rule. 

\subsection{The $n$-qubit algebra}

The second feature we might hope to generalise is the set of cubic Jordan algebra maps, $A\times B, \Tr(A, B)$, $N(A)$, see \autoref{sec:J}, which played such a key role in the construction of the various covariants and invariants. Recall, group theoretically these maps correspond to picking out certain irreps appearing in the tensor product of the $\Str_0(\J)$-representation carried  by $A, B\in\J$. For example, in the case $\Str_0(\J^{\Oct}_3)=E_{6(-26)}$, with $A$ and $B$ transforming as the $\rep{27}$, $A\times B$ is the $\rep{27}'$ in $\rep{27}\times\rep{27}=\rep{27}'_s+\rep{351}_a+\rep{351'}_s$ and  $N(A)$ is the singlet in $\rep{27}\times\rep{27}\times\rep{27}$. Each of the cubic Jordan algebra maps may be written using the irreducible $E_{6(-26)}$ invariant tensors, $d_{ijk}$ and $d^{ijk}$, where a downstairs (upstairs) $i=1,2, \ldots, 27$ transforms as a $\rep{27}$ ($\rep{27}'$).  For example,  
$
(A^\sharp)^i=\frac{1}{2!}d^{ijk}A_jA_k,$ and $N(A)=\frac{1}{3!}d^{ijk}A_iA_jA_k$.
For the sake of clarity, we will often drop the combinatorial factors in the following. For three qubits, the invariant tensors were simply
\be
d_{ijk}=|\epsilon_{ijk}|,\quad d^{ijk}=|\epsilon^{ijk}|,
\ee
which naturally suggests the $n$-qubit generalisation,
\be
d_{i_1\ldots i_n}:=|\epsilon_{i_1\ldots i_n}|,\quad d^{i_1\ldots i_n}:=|\epsilon^{i_1\ldots i_n}|.
\ee
This allows us to dualise a rank $p$ tensor, 
\be
A^{i_1i_2\ldots i_{n-p}}:=\frac{1}{p!}d^{i_1i_2\ldots i_{n-p}i_{n-p+1}\ldots i_{n}}A_{i_{n-p+1}\ldots i_{n}}.
\ee
For an $n$-qubit state, rank $p$ pairs $A_{i_1i_2\ldots i_p}, A^{i_1i_2\ldots i_p}$ are precisely bit-flip related.  For example, for three qubits, $A_i=(a_{100}, a_{010}, a_{001})$ and $A^i=(a_{011}, a_{101}, a_{110})$. This is crucial for building $S_n$ invariants.
Equipped with $d_{i_1\ldots i_n}, d^{i_1\ldots i_n}$ the $n$-qubit space $\mathcal{A}_n$ of symmetric tensors may be endowed with a pseudo-algebraic structure:  $\mathcal{A}_n$  is closed so long as we compose the tensors by contracting with $d_{i_1\ldots i_n}$ and $d^{i_1\ldots i_n}$.  

\subsection{The $n$-qubit generalised FTS transformations}

 The FTS transformations \eqref{eq:ftstrans} for three qubits in the our new notation are given by,
\be\label{eq:3qtrans}
\begin{split}
\phi(C):\begin{pmatrix}A^{ijk}\\A^{ij}\\A^i\\A^{0}\end{pmatrix}&\mapsto\left(
\begin{array}{lclclcl}
                A^{ijk}\\
         A^{ij} &+&                C_k  A^{ijk}\\
A^i &+&C_jA^{ij}&+&C_jC_kA^{ijk}\\
 A^{0}    &+&C_iA^i   &+&  C_{i}C_j  A^{ij}    &\;\,+ &   C_{i}C_jC_kA^{ijk}
\end{array}\;\,\right),\\
\psi(D):\begin{pmatrix}A_{0}\\A_i\\A_{ij}\\A_{ijk}\end{pmatrix}&\mapsto\left(
\begin{array}{lclclcl}
 A_{0}  &+&    D^iA_i   &+&     D^i D^{j}A_{ij} &+& D^{i}D^jD^kA_{ijk}  \\
 A_{i}  &+&      D^jA_{ij}    &+&   D^{j}D^{k}   A_{ijk}    \\
A_{ij} &+& D^kA_{ijk} \\
                  A_{ijk}
\end{array}\right),\\
\hat{\tau}(\lambda):\begin{pmatrix}A_{0}\\A_i\\A^i\\A^{0}\end{pmatrix}&\mapsto\left(
\begin{array}{ll}
d_{lmn}\lambda^l\lambda^m\lambda^n &A_0\\
\xi^ld_{lmn}\lambda^m\lambda^n& A_i\\
\xi^l\xi^md_{lmn}\lambda^{n}& A^{i}\\
\xi^l\xi^m\xi^nd_{lmn}&A^{0}
\end{array}\right),
\end{split}
\ee
where  $\lambda\in\C-\{0\}$ and $\xi=\lambda^{-1}$. For $\phi(C), \psi(D)$ we have made a judicious choice of dualisations in order to make the correct $n$-qubit generalisation manifest.
Under $\phi(C)$ a rank $p$ tensor $A^{i_1i_2\ldots i_p}$ transforms into the sum of all tensors $A^{i_1i_2\ldots i_q}$ with $q\geq p$ contracted with the necessary powers of $C_i$ to give back rank $p$. Explicitly,
\be
\begin{array}{lll}
A^{i_1i_2\ldots i_{n}}&\mapsto &[A^{i_1i_2\ldots i_{n}}],\\
A^{i_1i_2\ldots i_{n-1}}&\mapsto &[A^{i_1i_2\ldots i_{n-1}}+C_{i_n} A^{i_1i_2\ldots i_{n}}],\\
A^{i_1i_2\ldots i_{n-2}}&\mapsto &[A^{i_1i_2\ldots i_{n-2}}+C_{i_{n-1}} A^{i_1i_2\ldots i_{n-1}}+C_{i_{n-1}}C_{i_{n}} A^{i_1i_2\ldots i_{n-1} i_n}],\\
\;\;\vdots&\;\vdots &\;\;\;\vdots\\
A^{0}&\mapsto& [A^{0}+C_i A^{i}+C_iC_j A^{ij}+\cdots+C_{i_1}C_{i_2}\cdots C_{i_n} A^{i_1i_2\ldots i_n}],\\
\end{array}
\ee
Similarly, under $\psi(D)$ a rank $p$ tensor $A_{i_1i_2\ldots i_p}$ transforms into the sum of all $A_{i_1i_2\ldots i_q}$ with $q\geq p$, contracted with the necessary powers of $D^i$ to give back rank $p$,
\be
\begin{array}{lll}
A_{0}&\mapsto& [A_{0}+D^i A_{i}+D^iD^j A_{ij}+\cdots+D^{i_1}D^{i_2}\cdots D^{i_n} A_{i_1i_2\ldots i_n}],\\
A_{i}&\mapsto &[A_{i}+D^j A_{ij}+\cdots+D^{i_2}D^{i_3}\cdots D^{i_n} A_{ii_2i_3\ldots i_n}],\\
A_{ij}&\mapsto &[A_{ij}+\cdots+D^{i_3}D^{i_4}\cdots D^{i_n} A_{ij i_3i_4\ldots i_n}],\\
\;\;\vdots&\;\vdots &\;\;\;\vdots\\
A_{i_1i_2\ldots i_n}&\mapsto &[A_{i_1i_2\ldots i_n}].\\
\end{array}
\ee
The generalised $\hat{\tau}(\lambda)$ may also be concisely written using this notation,
\be
\hat{\tau}(\lambda):A_{i_1i_2\ldots i_p}\mapsto  \xi^{j_1}\xi^{j_2}\cdots\xi^{j_p}d_{j_1j_2\ldots j_n}\lambda^{j_{p+1}} \cdots\lambda^{j_{n-1}}\lambda^{j_n} A_{i_1i_2\ldots i_p}.
\ee
Hence, adopting the notational convention $A_{[p]}$ ($A^{[p]}$) for a rank $p$ tensor with downstairs (upstairs) indices, the $n$-qubit transformations $\phi, \psi$ and $\hat{\tau}$ may be summarised as follows,
\be
\begin{array}{rlll}
\phi(C_{[1]}): &A^{[p]}&\mapsto &\sum_{k=p}^{n}C^{(k-p)}_{[1]}A^{[k]},\\
\psi(D^{[1]}): &A_{[p]}&\mapsto &\sum_{k=p}^{n}D^{[1](k-p)}A_{[k]},\\
\hat{\tau}(\lambda^{[1]}): &A_{[p]}&\mapsto &\xi^{[1](p)}d_{[n]}\lambda^{[1](n-p)} A_{[p]},\\
\end{array}
\ee
One useful observation that immediately follows is that one can always assume $A_0=1, A^i=0$ under SLOCC.  It is also clear that the Jordan ranks \eqref{eq:Jranks} naturally generalises to a set of rank conditions on $A_{[1]}, A^{[1]}$.

However, we have yet to develop a systematic method  for writing covariants/invariants in this scheme. One example, though, defined for $n$ qubits, is given by
\be\label{eq:quadftsinv}
(\mathcal{A}_n, \mathcal{B}_n)	=\sum_{k=0}^{n}\frac{(-1)^k}{k!}A_{[k]}B^{[k]}.
\ee
This is symmetric (antisymmetric) for even (odd) $n$. It is simply the determinant in the 2-qubit case and the antisymmetric bilinear form of the FTS in the 3-qubit case. There are four algebraically independent 4-qubit permutation and SLOCC-equivalence group  invariants \cite{Briand:2003a} of  order two, six, eight and twelve,  \eqref{eq:quadftsinv} being the order two example.

\subsubsection*{2-qubit example:}
The 2-qubit state corresponds to,
\be
\ket{\psi}\leftrightarrow\Psi=\{A_0, A_{i}, A_{ij}\}, \qquad i,j=1,\ldots , 2
\ee
where
\be
\{A_0=a_{00}, \quad A_{i}=\begin{pmatrix}a_{10}\\a_{01}\end{pmatrix},\quad A^{0}=a_{11}\}.
\ee
It is easy to verify in this scheme that every state $\mathcal{A}_2$ is SLOCC-equivalent to the reduced canonical form:
\be
\mathcal{A}_{2}^{\text{red}}=\{1,  0, k\},\longleftrightarrow\ket{00}+k\ket{11}
\ee
where $k=(\mathcal{A}_2, \mathcal{A}_2)$. First we may always assume $A_i$ is non-zero. (If $A_i=0$ then we may assume that  $A_0$ non-zero using $\mathcal{Z}$ if necessary.) Now apply $\phi(C)$ to get a non-zero $A_i$. Apply $\psi(D)$ with $d_{ij}D^iD^j=0$ so that $A_0\mapsto A_0 +D^iA_i$. Choose $D$ s.t. $A_0 +D^iA_i=1$. Finally, apply $\phi(C)$ with $C_i=-A_i$.

This gives us the well known 2-qubit SLOCC entanglement classification: there are just two classes, separable and entangled, the latter being a family of orbits parametrised by $(\mathcal{A}_2, \mathcal{A}_2)=k$.

\section*{Acknowledgements} 

I would like to extend my gratitude to the conference organisers and especially to Professor Radu Iordanescu. Many thanks to D Dahanayake, MJ Duff, H Ebrahim, and W Rubens, with whom this work was done. The work of LB is supported by an Imperial College Junior Research Fellowship.

\section*{References}

\providecommand{\newblock}{}

\end{document}